\newcommand{\argmax}[1][0]{\underset{#1}{\arg\!\max}\;}
\newcommand{\A}{\mathcal{A}}
\theoremstyle{plain}
\newtheorem{theorem}{Theorem}
\newtheorem{lemma}{Lemma}
\newtheorem{proposition}{Proposition}
\theoremstyle{remark}
\newtheorem{example}{Example}
\newcommand{\qee}{\hfill$\lozenge$}
\author[]{Wolfgang Dvo{\v r}{\' a}k}
\author[]{Monika Henzinger}
\affil[]{University of Vienna, Faculty of Computer Science, Vienna, Austria}
\title{Online Ad Assignment with an Ad Exchange}
\begin{document}

\maketitle

\begin{abstract}
Ad exchanges are becoming an increasingly popular way to sell advertisement slots on the internet.
An ad exchange is basically a spot market for ad impressions. 
A publisher who has already signed contracts reserving advertisement impressions on his pages can choose between 
assigning a new ad impression for a new page view to a contracted advertiser or to sell it at an ad exchange. This leads to an online revenue maximization problem for the publisher. 
Given a new impression to sell decide whether 
(a) to assign it to a contracted advertiser and if so to which one or 
(b) to sell it at  the ad exchange and if so at which reserve price.
We make no assumptions about the distribution of the advertiser valuations that participate in the ad exchange
and show that 
there exists a simple primal-dual based online algorithm, whose lower bound for the revenue 
converges to $R_{ADX} + R_A (1 - 1/e)$, where $R_{ADX}$ is the revenue that the optimum algorithm achieves from the ad exchange and $R_A$ is the revenue that the optimum algorithm achieves from the contracted advertisers.
\end{abstract}





\section{Introduction}
The market for display ads on the internet is worth billions of dollars and continues to rise. 
Not surprisingly, there are multiple ways of selling display advertisements. 
Traditionally, publishers signed
long-term contracts with their advertisers, fixing the number of {\em impressions}, i.e. assigned ad slots views,
as well as their price. In the last few years, however, spot markets, so called {\em Ad Exchanges}~\cite{Muthukrishnan2009}, have been developed, 
with Amazon, Ebay, 
and Yahoo (to just name a few) all offering their own ad exchange. Thus, every time a user requests to download a
page from a publisher, the publisher needs to decide
(a)  which of the ad impressions on this page should be
assigned to which contracted advertiser, and
(b) which should be sold at  the ad exchange and at which {\em reserve price}\footnote{The reserve price is the minimum required price at which an impression is sold at an ad auction. If no offer is at or above the reserve price, the impression is not sold.}.

Ad exchanges are interesting for publishers as 
(1) basically an unlimited number of ad impressions can be sold at ad exchanges, and 
(2) if the publishers have additional information about the user, 
they might sell an impression at a much higher price at the ad exchange
than they could receive from their contracted advertisers.  
As ad impressions that did not receive a bid at or above the reserve price at the ad exchange can still be
assigned to contracted advertisers, a revenue-maximizing publisher can
offer {\em every} ad impression first at an ad exchange at a ``high enough'' reserve price and then afterwards assign the still unsold impressions to contracted advertisers. 
The question for the advertiser becomes, thus, (i) what reserve price to choose, and (ii) to which advertisers to assign the unsold impressions. We model this setting as
an online problem and achieve the following two results: If the revenue achievable by the ad exchange for each ad impression is known,  
we give a constant competitive algorithm. Then we show how to
convert this algorithm into a second algorithm that works in the setting where the revenue achievable from the  ad exchange is not known. Assume that the auction 
executed at the ad exchange fulfills the following property $P$: \emph{If an ad impression is sold at the ad exchange, then the revenue achieved is independent of the reserve price chosen by the publisher.} Thus, the reserve price influences only {\em whether} the ad impression is sold, {\em not} the price that is achieved. 
For example, a first price auction with reserve prices fulfills this condition. If the auction at the ad exchange fulfills this condition, then our second algorithm is constant competitive when compared with the optimum offline algorithm.

When modeling contracted advertisers we use the {\em model with free disposal} introduced in~\cite{FeldmanKMMP09}: Each advertiser $a$ comes with a number $n_a$ and the revenue that an algorithm receives from $a$
consists of the $n_a$ {\em most valuable} ad impressions assigned to $a$. 
Additional impressions assigned to $a$ do not generate any revenue.

More formally we define the following
\emph{Online Ad Assignment Problem with Free Disposal and an Ad Exchange.} 
There is a set of contracted advertisers $A$ and an ad exchange $\alpha$. Each advertiser $a$ comes with a number
$n_a$ of ad impressions such that $a$ pays only for the $n_a$ {\em most valuable} ad impressions assigned to $a$, or for all
assigned ad impressions if fewer than $n_a$ are assigned to $a$. To simplify the notation we set $n_{\alpha} = \infty$.
Now a finite  sequence ${\cal S} = S_0, S_1, \dots $ of sets $S_l$ with 
$l = 0, 1, \dots$, of
ad impressions arrives in order. When $S_l$ arrives,  the weights $w_{i,a}$ for each $i \in S_l$ and $a \in A \cup \{\alpha\}$ are revealed and 
the online algorithm has to assign
each $i \in S_l$ {\em before} further sets $S_{l+1}, S_{l+2}$, etc.~arrive. Let $\A : I \rightarrow A \cup \{\alpha\}$ be an
{\em assignment} of impressions to advertisers. An assignment is {\em valid} if no two impressions in the same set
$S_l$ are assigned to the same advertiser $a \in A$. 
Let $I_{\A}(a)$ be the set of the $n_a$ highest weighted impressions assigned to advertiser $a$ by $\A$. 
Then the {\em revenue} $R(\A)$ of $\A$ is $\sum_{a\in A \cup \{\alpha\}} \sum_{i \in I_{\A}(a)} w_{i,a}$.
The goal of the algorithm is to produce a valid assignment $\A$ with maximum revenue $R(\A)$.
The {\em competitive ratio} of an online algorithm is the minimum over all sequences ${\cal S}$ of the ratio of
the revenue achieved by the online algorithm on ${\cal S}$ and the revenue achieved by the optimal offline algorithm on ${\cal S}$, where
the latter algorithm is given all of ${\cal S}$ {\em before} it makes the first decision.

Feldman et al.~\cite{FeldmanKMMP09} studied a special case of our problem, namely the
setting {\em without an ad exchange} and where each set $S_l$ has size one, i.e. where the impressions arrive consecutively.
For that setting they gave a primal-dual based $0.5$ competitive algorithm whose competitive ratio converges to $(1 - 1/e)$ ratio when {\em all}
the $n_a$ values go to infinity. More precisely let $n_A = \min_{a \in A} n_a$. Then their algorithm is
$1-(\frac{n_A}{n_A+1})^{n_A}$-competitive.
They also showed that this ratio is tight when considering deterministic algorithms~\cite{FeldmanKMMP09}.
 Let $R_a$ for an advertiser $a \in A \cup \{\alpha\}$ be the revenue that the optimal algorithm receives from $a$.
We extend their results in several ways. 
(1) We consider a setting with one advertiser, called ad exchange, 
that has infinite capacity\footnote{It is straightforward to extend the algorithm and its analysis to multiple ad exchanges.}.
Moreover, we allow multiple ad slots on a page, with the condition that no two can be assigned to the same
advertiser, i.e.\ for us $|S_l|$ can be larger than 1.
(2) The revenue of our algorithm depends {\em directly} on the $n_a$ value, not on $n_A$. More precisely,
if no ad exchange exists, our algorithm receives a revenue of at least $\sum_a (1-(\frac{n_a}{n_a+1})^{n_a}) R_a$.
When an ad exchange is added, our algorithm achieves a revenue of at least $R_{\alpha} + \sum_a (1-(\frac{n_a}{n_a+1})^{n_a}) R_a$.
(3) We show how to modify our algorithm for the setting where $w_{i,\alpha}$ is unknown for all $i$. 
In this setting our algorithm computes a reserve price and sends {\em every} impression first to the ad exchange. 
The reserve price is set such that if the auction executed at the ad exchange
fulfills property $P$ then the above revenue bounds continue to hold, i.e. it achieves a revenue of at least $R_{\alpha} + \sum_a (1-(\frac{n_a}{n_a+1})^{n_a}) R_a$.

\paragraph{Techniques} 
Our algorithm is a modification of the standard  primal-dual algorithms in~\cite{FeldmanKMMP09}
but it is itself not a standard primal-dual algorithms as it does not construct a feasible primal and dual solution to 
a {\em single} LP. Instead
in the analysis we use several primal and dual LPs, one for each advertiser $a$ and
use the dual solutions to upper bound $R_a$. 
However, the corresponding primal feasible solution is {\em not} directly related
to the revenue the algorithm achieves from $a$. Instead, the solution constructed by the algorithm is a feasible solution for a primal program
that is the combination of all individual LPs. This property is strong enough to give the claimed bounds.
The crucial new  ideas in our algorithms are (i)
the observation that when deciding to whom an ad slot 
is assigned the publisher should be biased towards advertiser with large $n_a$
and in particular towards the ad exchange and 
(ii) that based on the structure of the algorithm it can be easily modified to compute an reserve price for 
the auction in the ad exchange if the $w_{i,\alpha}$ values are unknown.
\paragraph{Further Related Work}
We describe prior work on the question whether the publisher should assign an impression to a contracted advertiser or an ad exchange.
In~\cite{BalseiroFMM14} a scenario is studied, where the $w_{i,a}$ follow a joint distribution and no disposal is allowed.
Gosh et al.~\cite{Ghosh2009} assume that for each impression $i$ the $w_{i,\alpha}$ values follow a
known distribution and the contracted advertisers have a quality value depending on $w_{i,\alpha}$.
They study the trade-off between the quality of the impressions assigned to the advertisers and 
revenue from the ad exchange.
The work in~\cite{Alaei2009}, like our work, does not make Bayesian assumptions but studies online algorithms in
the worst case setting. The main difference to our work 
is that there the contracted advertisers also arrive online and that there is no free disposal.

Finally, Devanur et al.~\cite{DevanurHKMY13} extend \cite{FeldmanKMMP09} to the scenario with multiple ad slots on a page and 
constraints on ads being assigned together, but they neither consider ad-exchanges nor consider the different capacities $n_a$ 
in the competitive ratio.

\paragraph{Structure of the paper}
In Section~\ref{sec:capone} we discuss why the algorithm from~\cite{FeldmanKMMP09}
is not satisfying in our setting and present a simple online algorithm 
for the 1-slot case, which we improve in  Section~\ref{sec:main} 
to achieve a revenue of at least $R_{\alpha} + \sum_a (1-(\frac{n_a}{n_a+1})^{n_a}) R_a$.
In Section~\ref{sec:multislot} we generalize this algorithm to the multi-slot setting.
Finally, in Section~\ref{sec:extensions} we show how to adapt it if the $w_{i,\alpha}$ values are unknown.
\section{A Simple 1-Slot Online Algorithm}\label{sec:capone}
In Sections \ref{sec:capone} and \ref{sec:main} we consider algorithms for the
1-slot setting, i.e., where each $S_l$ just contains a single impression $i$.
Given an instance of such an online ad assignment problem we can build an equivalent instance 
where all capacities $n_a=1$. 
Simply replace each advertiser $a$ by $n_a$ copies $a_1, \dots a_{n_a}$ with the capacities $1$ and 
for each impression $i$ set $w_{i,a_p} = w_{i,a}$  for all $1 \le p \le n_a$. 
Thus in this section we assume $n_a=1$ for each $a \in A$.
Then we formulate the offline problem as an integer linear program (ILP), 
where the variable $x_{i,a}$ is set to $1$ if $i$ is assigned to advertiser $a$ and to 0, otherwise.
\begin{align*}
 \text{\textbf{Primal:} } \max&\  \sum_{i,a \in A \cup \{\alpha\}} w_{i,a}\ x_{i,a}\\[5pt]
   \sum_{a \in A \cup \{\alpha\}} x_{i,a}&\leq 1 \quad \forall i\\
   \sum_{i} x_{i,a} &\leq 1 \quad \forall a \in A
\end{align*}
The first type of constraints ensures that each impression is assigned to at most one advertiser, while the second 
type of constraints ensures that each $a \in A$ is assigned at most one impression.
It has the following dual LP.
\begin{align*}
 \text{\textbf{Dual:} } \min&\  \sum_{i} z_i + \sum_{a \in A} \beta_a\\
   z_i+ \beta_a &\geq w_{i,a} \quad \forall i,\forall a \in A\\
   z_i &\geq w_{i,\alpha} \quad \forall i
\end{align*}

\begin{algorithm}[t]
  \caption{}
  \label{alg:1}
  \smallskip
  \begin{enumerate}
    \item Initialize $\beta_a=0$, $\beta_{\alpha}=0$
    \item When impression $i$ arrives
	\begin{enumerate}
	    \item Compute $j=\argmax[a \in A \cup \{\alpha\}]\{w_{i,a}-\beta_a\}$.
	    \item if $j=\alpha$ then set $x_{i,\alpha}=1$ and $z_i=w_{i,\alpha}$.
	    \item if $j\in A$ then set $x_{i,j}=1$, $\forall\ i'\not=i:\ x_{i',j}=0$, $z_i=w_{i,j}-\beta_j$ and $\beta_j=w_{i,j}$.
	\end{enumerate}
  \end{enumerate}\vspace{-6pt}
\end{algorithm}

For notational convenience we assume an additional variable $\beta_\alpha$  which remains $0$ for the whole algorithm.
We next consider a straight forward generalization of the online algorithm in~\cite{FeldmanKMMP09}, called Algorithm~\ref{alg:1}, 
to our setting.
This algorithm constructs a feasible integral solution for the Primal LP, corresponding to an ad assignment, 
and a feasible solution for the dual LP that is used to upper bound the revenue of the optimal assignment.

Algorithm~\ref{alg:1} constructs feasible solutions for both the Primal and the Dual:
when impression $i$ is assigned to advertiser $j$ then $x_{i,j}$ is set to 1,  $\beta_j$ is set to $w_{i,j}$, and  $z_i$ is set to $\max_{a \in A \cup \{\alpha\}}\{w_{i,a}-\beta_a\}$. 
Note that the loss in revenue of Algorithm~\ref{alg:1} compared to the optimal assignment
{\em exclusively}  comes from the impression assigned to advertisers in $A$.
\begin{proposition}\label{prop:1}
Let $\A$ be an ad assignment computed by Algorithm~\ref{alg:1},  then
$
 R(OPT) \leq R_\alpha(\A) + 2 \cdot R_A(\A)
$. 
\end{proposition}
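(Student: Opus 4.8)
The plan is to invoke weak LP duality. Since $R(OPT)$ equals the optimum of the Primal ILP, it is bounded above by the optimum of the Primal LP relaxation, and hence by the objective value $\sum_i z_i + \sum_{a\in A}\beta_a$ of \emph{any} feasible solution of the Dual. Algorithm~\ref{alg:1} maintains such a candidate dual solution, so it suffices to show that (i) at termination this solution is dual feasible, and (ii) its objective value is at most $R_\alpha(\A) + 2\,R_A(\A)$.

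For (i) the key observation is that $\beta_a$ is non-decreasing over the run of the algorithm for every $a\in A$: whenever impression $i$ gets assigned to some $j\in A$, the choice of $j$ gives $w_{i,j}-\beta_j = \max_{a\in A\cup\{\alpha\}}\{w_{i,a}-\beta_a\} \ge w_{i,\alpha}-\beta_\alpha = w_{i,\alpha}\ge 0$ (weights are nonnegative), so the update $\beta_j\leftarrow w_{i,j}$ can only raise $\beta_j$. At the moment $i$ is processed we set $z_i=\max_{a}\{w_{i,a}-\beta_a\}$, which with the then-current $\beta_a$ already satisfies $z_i+\beta_a\ge w_{i,a}$ for all $a\in A$ and $z_i\ge w_{i,\alpha}$; since no $\beta_a$ ever decreases afterwards, both families of inequalities still hold at the end. (Every impression is assigned somewhere, as the exchange is always an available option of nonnegative value, so every $z_i$ is defined.)

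For (ii) I would split $\sum_i z_i$ according to the advertiser to which each impression is assigned at its arrival. Impressions assigned to $\alpha$ contribute $z_i=w_{i,\alpha}$, and since $\alpha$ has unbounded capacity these sum to exactly $R_\alpha(\A)$. Fix $a\in A$ and let $i_1,\dots,i_k$ be the impressions ever assigned to $a$, in chronological order; then $z_{i_1}=w_{i_1,a}$ and $z_{i_t}=w_{i_t,a}-w_{i_{t-1},a}$ for $t\ge 2$, so $\sum_t z_{i_t}=w_{i_k,a}$ by telescoping. The final value of $\beta_a$ is $w_{i_k,a}$, and since the $\beta_a$-sequence is non-decreasing $w_{i_k,a}=\max_t w_{i_t,a}$, which (as $n_a=1$) is precisely the revenue $R_a(\A)$ the algorithm collects from $a$. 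Hence advertiser $a$ contributes $\sum_t z_{i_t}+\beta_a = 2\,R_a(\A)$ to the dual objective, and summing over $\alpha$ and all $a\in A$ gives $\sum_i z_i+\sum_{a\in A}\beta_a = R_\alpha(\A)+\sum_{a\in A}2\,R_a(\A) = R_\alpha(\A)+2\,R_A(\A)$; with weak duality this proves the claim. The only point requiring care is this last accounting — that the $z$-values of impressions later displaced from $a$ are retained and telescope, and that the final $\beta_a$ equals the algorithm's revenue from $a$ — but both are immediate consequences of the monotonicity of $\beta_a$, so I anticipate no real obstacle.
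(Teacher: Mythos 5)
Your proof is correct and follows essentially the same route as the paper: both arguments rest on weak duality and on the observation that the dual objective built by the algorithm equals $R_\alpha(\A)+2R_A(\A)$, the paper organizing this as an impression-by-impression induction on the increments while you obtain the same identity by telescoping per advertiser at termination. If anything, you are slightly more careful than the paper in explicitly verifying dual feasibility via the monotonicity of the $\beta_a$ (the paper merely asserts feasibility before the proposition), under the implicit standing assumption that all weights are nonnegative.
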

\begin{proof}
In the following we will use that $R(\A)$ equals the value of the primal solution and 
that the value of the dual solution is an upper bound on $R(OPT)$.
We prove the claim by induction on the assigned impressions. 
Clearly the base case where no impression is assigned is fine.
Now consider an arbitrary $i$ to be assigned and notice that, by (1.), $\beta_a$ is such that $\beta_a=0$ if
no impression was assigned to $a$. 
Otherwise, by (2c), $\beta_a=w_{i',a}$ where $i'$ is the highest weighted impression assigned to $a$.
We simultaneously consider the increase of the primal and the dual solution when adding a new impression~$i$.

If Algorithm~\ref{alg:1} assigns $i$ to an $a\in A$ this is by rule $(c)$.
Let $\beta^n_a$, $\beta^o_a$ be the new and old value of $\beta_a$.
The statement $x_{i,a}=1$, $\forall\ i'\not=i:\ x_{i',a}=0$ increases the revenue $R(\A)$ by 
$w_{i,a}-\beta^o_a$, as $\beta^o_a$ is the value $w_{i',a}$ of the impression $i'$ we have to drop.
On the other side the statement $z_i=w_{i,a}-\beta^o_a$ increases the objective of the Dual by the same amount. 
Additionally the objective of the Dual is affected by  updating $\beta_a$ with $\beta^n_a=w_{i,a}$. 
This additionally increases the objective of the Dual by $\beta^n_a-\beta^o_a=w_{i,a}-\beta^o_a$.
Thus, if $i$ is assigned to an $a \in A$, the increase of the objective of the Dual is twice the increase of the objective of the primal, 
the revenue $R(\A)$.

If Algorithm~\ref{alg:1} assigns $i$ to $\alpha$ this is by rule $(b)$.
By $x_{i,\alpha}=1$ we increase the revenue by $w_{i,\alpha}$ and by setting $z_i=w_{i,\alpha}$
we increase the objective of the Dual by the same amount.
Hence, as in case (b) the $\beta_a$ are not affected,
we obtain that $R(OPT) \leq R_\alpha(\A) + 2 \cdot R_A(\A)$.
\end{proof} 

However, the  Algorithm~\ref{alg:1} does not guarantee that impressions are sent to ad exchange when the optimal algorithm does. 
Thus the optimal offline assignment might send many impressions to the ad exchange, 
while the online assignment of the above algorithm does not and thus might only be an $1/2$ approximation 
(Proposition~\ref{prop:1} guarantees that it is not worse.).
Such a situation is given in Example~\ref{example:1}.
\begin{example}\label{example:1}
Consider $A=\{a\}$ with $n_a=1$ and impressions $1 \leq i \leq n$ with  
$w_{i,\alpha}=1-\epsilon$ and $w_{i,a}=i$. 
Then the revenue  $R(\A)$ of Algorithm~\ref{alg:1} after $n$ impressions is $n$,
while the optimal assignment achieves $n+(n-1)(1-\epsilon)$, where $(n-1)(1-\epsilon)$ is achieved by the ad exchange.
For $\epsilon \rightarrow 0$ and $n \rightarrow \infty$ the ratio $R(\A)/R(OPT)$ is $1/2$ 
although half of the revenue in the optimal assignment $OPT$ comes from the ad exchange.
\qee
\end{example}
Thus the algorithm from~\cite{FeldmanKMMP09} is only $1/2$-competitive, even when an ad exchange, i.e., an advertiser with infinite
capacity, is added.

Given an ad assignment $\A$ let
$R_\alpha(\A)$ denote the revenue the assignment gets from impressions assigned to the ad exchange and let
$R_A(\A)$ denote the revenue the assignment gets from impressions assigned to contracted advertisers.
Thus we have $R(\A)=R_\alpha(\A)+R_A(\A)$. Additionally, we use $OPT$ to denote the optimal assignment.
We present next Algorithm~\ref{alg:2}, an online algorithm that 
receives as revenue at least $R_{\alpha}(OPT) + (1/2) R_a(OPT)$, which is already an improvement over Algorithm~\ref{alg:1}. 
It is based on the observation that {\em assigning an impression that should be sent to the ad exchange to an advertiser in $A$
is worse than sending an impression that should go to an  advertiser in $A$ to the ad exchange.}
Thus, the algorithm is biased towards the ad exchange.
Specifically the algorithm assigns an impression to an advertiser $a\in A$
only if it gives at least double the revenue on $a$ than on $\alpha$.

\begin{algorithm}[t]
  \caption{}
  \label{alg:2}
  \smallskip
  \begin{enumerate}
  \item Initialize $\beta_a=0$ for all $a \in A \cup \{\alpha\}$
  \item When impression $i$ arrives
	\begin{enumerate}
	    \item Compute $j=\argmax[a \in A]\{w_{i,a}-\beta_a\}$.
	    \item if $(w_{i,j}-\beta_j) > 2 \cdot w_{i,\alpha}$ then assign $i$ to $j$  and set $\beta_j=w_{i,j}$.
	    \item if $(w_{i,j}-\beta_j) \leq 2 \cdot w_{i,\alpha}$ then assign $i$ to $\alpha$.
	\end{enumerate}
  \end{enumerate}\vspace{-6pt}
\end{algorithm}

\begin{theorem}\label{thm:1}
Let $\A$ be the ad assignment computed by Algorithm~\ref{alg:2} then 
$
  R(\A) \geq R_\alpha(OPT) + 1/2 \cdot R_A(OPT)
$. 
\end{theorem}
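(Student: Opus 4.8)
The plan is to run a primal--dual argument against a \emph{rescaled} pair of linear programs rather than the ones written above. Consider the LP obtained from Primal by keeping all constraints but changing the objective to $\sum_{i}\sum_{a\in A}\tfrac12 w_{i,a}\,x_{i,a}+\sum_i w_{i,\alpha}\,x_{i,\alpha}$; its dual keeps the objective $\sum_i z_i+\sum_{a\in A}\beta_a$ and the constraints $z_i\ge w_{i,\alpha}$, but replaces $z_i+\beta_a\ge w_{i,a}$ for $a\in A$ by $z_i+\beta_a\ge \tfrac12 w_{i,a}$. Because $n_a=1$ throughout this section, the optimal offline assignment $OPT$ is a feasible integral solution of the rescaled primal, and its value there is exactly $R_\alpha(OPT)+\tfrac12 R_A(OPT)$; hence, by weak LP duality, the objective of \emph{any} feasible solution of the rescaled dual is at least $R_\alpha(OPT)+\tfrac12 R_A(OPT)$. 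So it suffices to produce, alongside the assignment $\A$, a feasible dual solution of objective exactly $R(\A)$.

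I would construct this dual solution by reading it off the run of Algorithm~\ref{alg:2}, processing the impressions in arrival order as in the proof of Proposition~\ref{prop:1}. Write $b_a$ for the algorithm's running value of $\beta_a$, and for each impression $i$ write $b_a^{(i)}$ for the value of $b_a$ at the moment $i$ is processed (before $i$'s own update). Set the dual variable $\beta_a:=\tfrac12 b_a$ evaluated at termination. When impression $i$ is assigned to $j\in A$ by rule~(b), set $z_i:=\tfrac12\big(w_{i,j}-b_j^{(i)}\big)$; when $i$ is assigned to $\alpha$ by rule~(c), set $z_i:=w_{i,\alpha}$. One then checks step by step that every assignment increases $\sum_i z_i+\sum_{a\in A}\beta_a$ by exactly the amount it increases $R(\A)$: in case~(b) the new $z_i$ and the increase of $\beta_j$ each contribute $\tfrac12(w_{i,j}-b_j^{(i)})$, while, since $n_j=1$ and $w_{i,j}>b_j^{(i)}$, the revenue received from $j$ grows from $b_j^{(i)}$ to $w_{i,j}$, so both sides grow by $w_{i,j}-b_j^{(i)}$; in case~(c) both grow by $w_{i,\alpha}$. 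Summing over all impressions gives dual objective $=R(\A)$.

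The delicate part, and the place where the threshold $2$ and the choice $j=\argmax[a\in A]\{w_{i,a}-b_a\}$ are used, is checking that the final dual solution is feasible. The key point is monotonicity: each $b_a$ only increases over time (it is updated only in rule~(b), and only when the new weight exceeds the old $b_a$), so to verify $z_i+\tfrac12 b_a\ge \tfrac12 w_{i,a}$ at termination it suffices to verify it with $b_a$ replaced by $b_a^{(i)}$ --- the one exception being $a=j$ in case~(b), where $b_j$ is set to $w_{i,j}$ and the constraint reduces to $z_i\ge 0$. From the definition of $j$ we have $w_{i,a}-b_a^{(i)}\le w_{i,j}-b_j^{(i)}$ for all $a\in A$. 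In case~(b) this yields $\tfrac12 w_{i,a}\le z_i+\tfrac12 b_a^{(i)}$, while the firing condition $w_{i,j}-b_j^{(i)}>2w_{i,\alpha}$ gives $z_i>w_{i,\alpha}\ge 0$, so the exchange constraint and nonnegativity hold; in case~(c) the condition $w_{i,j}-b_j^{(i)}\le 2w_{i,\alpha}$ gives $\tfrac12 w_{i,a}\le w_{i,\alpha}+\tfrac12 b_a^{(i)}=z_i+\tfrac12 b_a^{(i)}$, and $z_i=w_{i,\alpha}$ meets the exchange constraint. Combining feasibility with the objective computation gives $R(\A)=\text{(dual objective)}\ge R_\alpha(OPT)+\tfrac12 R_A(OPT)$, as claimed. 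I expect the only genuinely error-prone step to be the bookkeeping over which value of $b_a$ (before or after the various updates) is the relevant one in each inequality.
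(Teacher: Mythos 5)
Your proof is correct, and it reaches the bound by a genuinely different route than the paper. The paper does not work with a single LP: it sets up an auxiliary pair $P_A$/$D_A$ whose variables and constraints range only over the impressions $I^A_{OPT}$ that $OPT$ sends to contracted advertisers, builds a feasible solution of $D_A$ from the algorithm's (unscaled) $\beta_a$'s together with $z_i=\max_{a\in A}\{w_{i,a}-\beta_a\}$ for $i\in I^A_{OPT}$ only, bounds $R_\alpha(OPT)$ by the trivial sum $\sum_{i\in I^\alpha_{OPT}}w_{i,\alpha}$, and then proves the per-impression charging inequality $\Delta R(\A,i)\ge \Delta d_\alpha(i)+\tfrac12\,\Delta d_A(i)$ (Lemmas~\ref{theorem1:lem1} and~\ref{theorem1:lem2}); the dual certificate there depends on knowing $OPT$'s split of the impressions. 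You instead fold the factor $\tfrac12$ into the objective of one LP over all impressions and all of $A\cup\{\alpha\}$, exhibit a feasible dual of that rescaled LP whose value telescopes to exactly $R(\A)$, and invoke weak duality once; your certificate is independent of $OPT$, and your argument is a genuine single-LP primal--dual analysis --- mildly notable since the paper explicitly remarks that its own analysis does not have that form. All the steps check out: $b_a$ is nondecreasing because rule~(b) fires only when $w_{i,j}-b_j^{(i)}>2w_{i,\alpha}\ge 0$, the telescoping $\sum_{i\to a}(w_{i,a}-b_a^{(i)})=b_a^{\mathrm{final}}$ gives dual objective $=R_\alpha(\A)+\sum_a b_a^{\mathrm{final}}=R(\A)$, and your case analysis for feasibility is exactly where the threshold $2$ is consumed. (Like the paper, you implicitly use $w_{i,\alpha}\ge 0$, e.g.\ for $z_i\ge 0$ in case~(b); that is a standing assumption here.) The trade-off: your version is cleaner and self-contained for Theorem~\ref{thm:1}, while the paper's decomposition into one LP per advertiser is the form that carries over directly to Theorems~\ref{thm:main} and~\ref{thm:multislot}, where each advertiser receives its own factor $c_a$ and the free-disposal accounting of Lemma~\ref{lem:beta} enters; a per-column rescaling in your style could likely be made to work there too, but it is not immediate.
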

\subsection{Proof of Theorem~\ref{thm:1}}
Let $I^A_{OPT}$, resp. $I^\alpha_{OPT}$, be the impressions assigned to $A$, resp. $\alpha$, by the optimal (offline) assignment OPT. 
\footnote{In case there are several optimal assignments we pick an arbitrary one.}
We give an LP $P_A$ for the advertisers $A$ and impressions $I^A_{OPT}$ and its dual $D_A$
such that any feasible solution for $D_A$ gives an upper bound $d_A$ for $R_A(OPT)$.
\vspace{-10pt}
\par\noindent
\begin{minipage}[t]{0.47\textwidth}
\begin{align*}
 \text{\textbf{Primal $P_A$:} } \max&\  \hspace{-15pt} \sum_{i \in I^A_{OPT},a \in A} \hspace{-15pt} w_{i,a}\ x_{i,a}\\[5pt]
   \sum_{a \in A } x_{i,a}&\leq 1 \quad \forall i\!\in\!I^A_{OPT}\\
   \sum_{i \in I^A_{OPT}} x_{i,a} &\leq 1 \quad \forall a\!\in\!A
\end{align*}
\end{minipage}
\hspace{10pt}
\begin{minipage}[t]{0.47\textwidth}
\begin{align*}
 \text{\textbf{Dual $D_A$:}}\,\min&\  \sum_{i  \in I^A_{OPT}} z_i + \sum_{a \in A} \beta_a\\[5pt]
   z_i+ \beta_a \geq w_{i,a}& \quad \forall i\!\in\!I^A_{OPT}\ \forall a\!\in\!A\\
\end{align*}
\end{minipage}

\vspace{7pt}
\noindent
Note that the summation in $P_A$ and the constraints in $D_A$ are 
only over impressions in $I^A_{OPT}$.
The objective value of the optimal solution of $D_A$,
is an upper bound for the objective of $P_A$, and thus also for $R_A(OPT)$.
However, there is no direct relationship between $R_A(\A)$ and the objective of $P_A$ for $\A$,
as $\A$ might also assign impressions from $I^\alpha_{OPT}$ to $A$.

To upper bound  $R_A(OPT)$ we construct a \emph{feasible solution for $D_A$}.
We do this in a iterative fashion, that is whenever Algorithm~\ref{alg:2} assigns an impression $i \in I^A_{OPT}$
we update the feasible solution for $D_A$ as follows:
\begin{enumerate}
 \item[(i)] For the $\beta_a$ variables we use the values currently set by the Algorithm~\ref{alg:2};
 \item[(ii)] For the variable $z_i$ we set  $z_i=w_{i,j}-\beta^o_j$, where  $\beta^o_a$ is the value of $\beta_a$ before $i$ is assigned.
\end{enumerate}
As $w_{i,j}-\beta^o_j=max_{a \in A}\{w_{i,a}-\beta_a\}$, all the constraints for $i$ are satisfied. 
Hence, doing this for all $i \in I^A_{OPT}$ gives a feasible solution for $D_A$ and its objective $d_A$ fulfills $d_A \geq R_A(OPT)$.\smallskip

To show that the claim holds after each assignment of an impression $i$ 
we investigate assigning one expression $i$ and study the effect to both the 
upper bound and the revenue we achieve.
To this end we introduce some notation:
\begin{itemize}
 \item $\Delta d_A(i)$ is the increase of the objective $d_A$ when the 
algorithm assigns impression $i$, i.e., 
the change in $d_A$ caused by the change in the $\beta$-values and the assignment of the new $z_i$ value (if $i \in I^A_{OPT}$).
 \item $\Delta d_\alpha(i)=w_{i,\alpha}$ if $i \in I^\alpha_{OPT}$ and $\Delta d_\alpha(i)=0$ otherwise.
 \item $\Delta R(\A,i)$ is the increase of the revenue when assigning $i$.
 \item $\beta^n_a$, resp.\ $\beta^o_a$, to denote the value of $\beta_a$ after, resp.\ before $i$ is assigned
\end{itemize}
Note that by the definitions  
(a) $\sum_{i \in I} \Delta d_A(i)=d_A$, 
(b) $\sum_{i \in I} \Delta d_\alpha(i)= R_\alpha(OPT)$ and 
(c) $\sum_{i \in I} \Delta R(\A,i)= R(\A)$. 
We will also exploit the fact that $\beta_a$ is such that $\beta_a=0$ if no impression was assigned to $a$ and otherwise 
$\beta_a=w_{i',a}$, where $i'$ is the impression currently assigned to $a$.

Next, to relate the increase of the upper bound with the gain of revenue, we distinguish whether one assigns an impression $i \in I^\alpha_{OPT}$ or an impression $i \in I^A_{OPT}$  

\begin{lemma}\label{theorem1:lem1}
 $\Delta R(\A,i) \geq \Delta d_\alpha(i) + 1/2 \cdot \Delta d_A(i)$, for $i \in I^A_{OPT}$.
\end{lemma}
\begin{proof}
 As $i \in I^A_{OPT}$, by definition, we have $\Delta d_\alpha(i)=0$.  
Thus, we actually have to show that $\Delta R(\A,i) \geq 1/2 \cdot \Delta d_A(i)$.
\begin{enumerate}
       \item If  Algorithm~\ref{alg:2} assigns $i$  to an $j\in A$ recall that we 
	     set $z_i=w_{i,j}-\beta^o_j$ and the algorithm sets $\beta^n_j=w_{i,j}$.
	     Thus  $\Delta d_A(i)=2\cdot(w_{i,j}-\beta^o_j)$ and
	     $\Delta R(\A,i)$ is given by $w_{i,j}$ minus the value of the impression we have to drop (if any), given by $\beta^0_a$.
	     As this values is stored in $\beta^o_j$ we get $\Delta R(\A,i)=w_{i,j}-\beta^o_j$ and thus $\Delta R(\A,i) \geq 1/2 \cdot \Delta d_A(i)$.
       \item If Algorithm~\ref{alg:2} assigns $i$ to $\alpha$ (although $OPT$ does not), 
	     we know from Step 2c that $(w_{i,j}-\beta_j) \leq 2 w_{i,\alpha}$,
	     where $j =\argmax[a \in A]\{w_{i,a}-\beta_a\}$. 
	     As we set $z_i=w_{i,j}-\beta^o_j$ and the algorithm keeps all $\beta_a$ unchanged
	     we get $\Delta d_A(i)= w_{i,j}-\beta^o_j$ and as we assign $i$ to $\alpha$ we have $\Delta R(\A,i)=w_{i,\alpha}$.
	     Thus $\Delta R(\A,i) = w_{i,\alpha} \geq 1/2 \cdot (w_{i,j}-\beta_j) = 1/2 \cdot \Delta d_A(i)$.
\end{enumerate} 
Hence, 
$\Delta R(\A,i) \geq 1/2 \cdot \Delta d_A(i)= \Delta d_\alpha(i) + 1/2 \cdot \Delta d_A(i)$. 
\end{proof}

\begin{lemma}\label{theorem1:lem2}
 $\Delta R(\A,i) \geq \Delta d_\alpha(i) + 1/2 \cdot \Delta d_A(i)$, for $i \in I^\alpha_{OPT}$.
\end{lemma}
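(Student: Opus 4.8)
The plan is to mimic the case analysis of Lemma~\ref{theorem1:lem1}, splitting on the decision Algorithm~\ref{alg:2} makes for $i$, but now using that $i \in I^\alpha_{OPT}$ forces $\Delta d_\alpha(i) = w_{i,\alpha}$ and that, since $i \notin I^A_{OPT}$, the dual $D_A$ has no variable $z_i$; hence $\Delta d_A(i)$ is caused \emph{solely} by the change of the $\beta$-values that feed $d_A$ (which are exactly the current $\beta$-values of Algorithm~\ref{alg:2}).

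First I would handle the case where Algorithm~\ref{alg:2} sends $i$ to the ad exchange (Step 2c). Then no $\beta_a$ changes and no $z$-variable is created, so $\Delta d_A(i) = 0$, while $\Delta R(\A,i) = w_{i,\alpha} = \Delta d_\alpha(i)$; the claimed inequality holds with equality. Next I would handle the case where Algorithm~\ref{alg:2} assigns $i$ to some $j \in A$ (Step 2b). Here $\beta_j$ is raised from $\beta^o_j$ to $\beta^n_j = w_{i,j}$, which increases $d_A$ by $w_{i,j} - \beta^o_j$; since no $z_i$ is added, $\Delta d_A(i) = w_{i,j} - \beta^o_j$. Because Step 2b only fires when $w_{i,j} - \beta_j > 2 w_{i,\alpha} \geq 0$, we have $w_{i,j} > \beta^o_j$, so $i$ displaces the previously assigned impression of weight $\beta^o_j$ and $\Delta R(\A,i) = w_{i,j} - \beta^o_j$. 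It then remains to check $w_{i,j} - \beta^o_j \geq w_{i,\alpha} + \frac{1}{2}(w_{i,j} - \beta^o_j)$, i.e.\ $\frac{1}{2}(w_{i,j}-\beta^o_j) \geq w_{i,\alpha}$, which is precisely the threshold condition of Step 2b.

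The only thing to be careful about — and the one obstacle worth stating — is the bookkeeping of $\Delta d_A(i)$ for $i \in I^\alpha_{OPT}$: one must remember that $D_A$ and its objective $d_A$ range only over $I^A_{OPT}$, so no $z_i$ term is ever added for such $i$, yet a bump of $\beta_j$ still propagates into $d_A$. Once this is pinned down, both cases give $\Delta R(\A,i) \geq \Delta d_\alpha(i) + \frac{1}{2}\Delta d_A(i)$, proving the lemma. Finally, summing this together with Lemma~\ref{theorem1:lem1} over all impressions $i$ and invoking identities (a)--(c) yields $R(\A) \geq R_\alpha(OPT) + \frac{1}{2} R_A(OPT)$, which is Theorem~\ref{thm:1}.
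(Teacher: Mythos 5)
Your proof is correct and follows essentially the same case analysis as the paper: the exchange case gives $\Delta d_A(i)=0$ and equality, and the case where $i$ is assigned to $j\in A$ uses that only the $\beta_j$ update (no $z_i$ term) contributes $w_{i,j}-\beta^o_j$ to $d_A$, with the threshold condition of Step 2b supplying $\frac{1}{2}(w_{i,j}-\beta^o_j) \ge w_{i,\alpha}$. Your explicit remark that $D_A$ has no $z_i$ variable for $i\in I^\alpha_{OPT}$ is exactly the bookkeeping point the paper relies on.
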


\begin{proof}
As $i \in I^\alpha_{OPT}$, by definition, $\Delta d_\alpha(i)=w_{i,\alpha}$. 
Recall that no $z$-value is affected in this case. 
We have to show that $\Delta R(\A,i) \geq w_{i,\alpha} + 1/2 \cdot \Delta d_A(i)$.
\begin{enumerate}
 \item If Algorithm~\ref{alg:2} assigns $i$ to the ad exchange then the $\beta_a$ are not changed.
    Thus $\Delta d_A(i)=0$ and
    $\Delta R(\A,i)$ is simply $w_{i,\alpha}$. Hence, $\Delta R(\A,i) = w_{i,\alpha} \geq w_{i,\alpha} + 1/2 \cdot \Delta d_A(i)$.
 \item  If Algorithm~\ref{alg:2} assigns $i$ to an $a \in A$ we have $(w_{i,a}-\beta^o_a) > 2 w_{i,\alpha}$
 and the algorithm sets $\beta^n_a=w_{i,a}$.
 Thus $\Delta d_A(i)=w_{i,a}-\beta^o_a$.
 Furthermore, $\Delta R(\A,i)$ is given by $w_{i,a}$ minus the value of the impression we have to drop (if any), given by $\beta^0_a$.
 Thus $\Delta R(\A,i)= (w_{i,a}-\beta^o_a) =1/2 \cdot (w_{i,a}-\beta^o_a)+1/2 \cdot(w_{i,a}-\beta^o_a) \geq w_{i,\alpha} + 1/2 \cdot \Delta d_A(i)$.  
\end{enumerate}
Hence,
$\Delta R(\A,i) \geq w_{i,\alpha} + 1/2 \cdot \Delta d_A(i)= \Delta d_\alpha(i) + 1/2 \cdot \Delta d_A(i)$.
\end{proof}

When combining Lemma~\ref{theorem1:lem1} and Lemma~\ref{theorem1:lem2} we obtain 
that for each $i\in I$:
$$
  \Delta R(\A,i) \geq \Delta d_\alpha(i) + 1/2 \cdot \Delta d_A(i).
$$

Finally, when summing over all impression $i \in I$ and using the above inequality we obtain the claim.

\begin{align*}
 R(\A)\!=\!\sum_{i \in I} \Delta R(\A,i) \geq  \sum_{i \in I} \left( \Delta d_\alpha(i)\!+\!\frac{\Delta d_A(i)}{2}\right)\, 
      \geq R_\alpha(OPT)\!+ \frac{R_A(OPT)}{2}.
\end{align*}
This completes the proof of Theorem~\ref{thm:1}.

\section{An Online 1-Slot Algorithm Exploiting High Capacities}\label{sec:main}
In this section we generalize the result from Section~\ref{sec:capone} to the setting where each advertiser $a\in A$ 
has an individual limit $n_a$ for the number of ad impressions he is willing to pay for
and we present Algorithm~\ref{alg:3} that achieves an improvement in revenue for advertisers $a$ with large $n_a$.

\vspace{7pt}
\noindent
In Algorithm~\ref{alg:3} we consider variables $\beta_a$ which, for $a \in A$, are always set s.t.
\begin{equation}\label{equ:beta}
 \beta_a = \frac{1}{n_a (e_{n_a}-1)} \sum_{j=1}^{n_a} w_j \left( 1+\frac{1}{n_a} \right)^{j-1}
\end{equation}
where the $w_j$'s are the weights of the impressions assigned to $a$ in non-increasing order
and $e_{n_a}=(1+1/{n_a})^{n_a}$.
That is, $\beta_a$ stores a weighted mean of the $n_a$ most valuable impressions assigned to $a$.
Again we keep $\beta_\alpha\!=\!0$ in the whole algorithm. 
Next we consider how assigning a new impression to $a$ affects $\beta_a$.
\begin{lemma}\label{lem:beta}
 	Consider a new impression $i$ being assigned to advertiser $a$. 
 	Let $\beta_a^o$, resp.\ $\beta_a^n$ denote the value of $\beta_a$ before, resp.\ after $i$ was assigned
 	and $v$ the value of the impression dropped from $\beta_a$ ($0$ if no impression is dropped), then
	\[
	    \beta_a^n - \beta_a^o\leq \frac{\beta_a^o}{n_a} - \frac{v \cdot e_{n_a}}{n_a (e_{n_a}-1)} + \frac{w_{i,a}}{n_a (e_{n_a}-1)}.
	\]
\end{lemma}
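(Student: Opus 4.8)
The plan is to track how the weighted mean $\beta_a$ changes when the new impression, of weight $w:=w_{i,a}$, is inserted into the sorted list of the $n_a$ most valuable impressions assigned to $a$. Write $c:=1+1/n_a$, so that $e_{n_a}=c^{n_a}$ and $c-1=1/n_a$, and let $w_1^o\ge\dots\ge w_{n_a}^o\ge 0$ be the old top-$n_a$ weights, padded with zeros if fewer than $n_a$ impressions have been assigned so far. In the trivial regime $w\le w_{n_a}^o$ the new impression is dominated by the current $n_a$-th best and so does not change $\beta_a$; there $\beta_a^n-\beta_a^o=0$ while the right-hand side of the lemma is nonnegative, since $\beta_a^o\ge\frac{w}{n_a(e_{n_a}-1)}\sum_{j=1}^{n_a}c^{j-1}=w$ (using $\sum_{j=1}^{n_a}c^{j-1}=\frac{c^{n_a}-1}{c-1}=n_a(e_{n_a}-1)$), and a short substitution then shows the right-hand side is at least $(\beta_a^o-w)/n_a\ge 0$.

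So the substantive case is $w>w_{n_a}^o$, where the dropped value is $v=w_{n_a}^o$ and $w$ lands at some position $k\in\{1,\dots,n_a\}$ determined by $w_{k-1}^o\ge w\ge w_k^o$ (with $w_0^o:=+\infty$). I would then write $\beta_a^n$ as the same kind of weighted sum as $\beta_a^o$, but with the old entries $w_k^o,\dots,w_{n_a-1}^o$ each shifted one position to the right (so the coefficient of $w_j^o$ moves from $c^{j-1}$ to $c^{j}$), with $w$ inserted carrying coefficient $c^{k-1}$, and with $w_{n_a}^o$ dropped. Subtracting, and using $c^{j}-c^{j-1}=c^{j-1}/n_a$, yields
\[
\beta_a^n-\beta_a^o=\frac{1}{n_a(e_{n_a}-1)}\left(w\,c^{k-1}+\frac{1}{n_a}\sum_{j=k}^{n_a-1}w_j^o\,c^{j-1}-w_{n_a}^o\,c^{n_a-1}\right).
\]
On the other hand $\beta_a^o/n_a=\frac{1}{n_a(e_{n_a}-1)}\cdot\frac{1}{n_a}\sum_{j=1}^{n_a}w_j^o\,c^{j-1}$, and the last two terms of the claimed upper bound together contribute $\frac{1}{n_a(e_{n_a}-1)}(w-w_{n_a}^o\,e_{n_a})$. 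Multiplying the target inequality through by $n_a(e_{n_a}-1)>0$ turns it into an inequality between explicit finite sums; after cancelling the common head $\sum_{j=1}^{k-1}$ and using $e_{n_a}-c^{n_a-1}=c^{n_a-1}/n_a$ to annihilate all the $w_{n_a}^o$ terms, it collapses to $w\,(c^{k-1}-1)\le\frac{1}{n_a}\sum_{j=1}^{k-1}w_j^o\,c^{j-1}$.

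To close the argument I would use the telescoping identity $c^{k-1}-1=\sum_{j=1}^{k-1}(c^{j}-c^{j-1})=\frac{1}{n_a}\sum_{j=1}^{k-1}c^{j-1}$, which rewrites the left-hand side as $\frac{1}{n_a}\sum_{j=1}^{k-1}w\,c^{j-1}$; the inequality then becomes $\sum_{j=1}^{k-1}(w_j^o-w)\,c^{j-1}\ge 0$, which holds term by term because $w_j^o\ge w_{k-1}^o\ge w$ for every $j\le k-1$ by the choice of the insertion position $k$ (and it is vacuous when $k=1$). The argument is really just careful bookkeeping: the one step that needs attention is writing the shifted weighted sum for $\beta_a^n$ correctly and keeping the index ranges straight, after which the reduction is the routine algebra sketched above. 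I therefore expect the main obstacle to be notational rather than conceptual.
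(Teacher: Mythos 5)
Your proof is correct and follows essentially the same route as the paper's: write $\beta_a^n$ as the shifted weighted sum with the new weight inserted at position $k$, and reduce the claimed bound to $\sum_{j=1}^{k-1}(w_j^o-w)\,c^{j-1}\ge 0$, which is exactly the point where the paper invokes $w_j\ge w_{i,a}$ for $j<k$ together with the monotonicity of the coefficients $c^{j-1}$. The only difference is that you also handle the degenerate case where the new impression does not enter the top $n_a$, which the paper leaves implicit.
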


Lemma~\ref{lem:beta} was already shown in~\cite{FeldmanKMMP09} but to keep the paper self-explanatory we provide a proof.
\begin{proof}
 Assume that $w_{i,a}$ is the impression with the k-th highest value.
 \begin{align*}
  \beta_a^n &=\!\frac{1}{n_a (e_{n_a}\hspace{-3pt}\!-\!1)} 
	  \left[ \sum_{j=1}^{k-1} w_j \left(\!1\!+\!\frac{1}{n_a} \right)^{j\!-\!1} \hspace{-12pt} +
		 w_{i,a}\left(\!1\!+\!\frac{1}{n_a}\right)^{k\!-\!1} \hspace{-5pt} +
		 \sum_{j=k}^{n_a-1} w_j \left(\!1\!+\!\frac{1}{n_a} \right)^{j}	  
	  \right]\nonumber\\
	  &\leq \!\frac{1}{n_a (e_{n_a}\hspace{-3pt}\!-\!1)} 
	  \left[ w_{i,a} + \sum_{j=1}^{n_a-1} w_j \left(\!1\!+\!\frac{1}{n_a} \right)^{j}  
	  \right]
 \end{align*}
 To obtain the inequality we exploited that $w_j\!>\!w_{i,a}$ for $j\!<\!k$ and that
 $\left( 1 + {1}/{n_a} \right)^{j-1} < \left( 1+{1}/{n_a} \right)^{j'-1}$ for $j<j'$. 
 We proceed with standard transformations.
 \begin{align*}
\beta_a^n &\leq \!\frac{1}{n_a (e_{n_a}\hspace{-3pt}\!-\!1)} 
	  \left[ w_{i,a} + \sum_{j=1}^{n_a-1} w_j \left(\!1\!+\!\frac{1}{n_a} \right)^{j}  
	  \right]\\
	  &= \!\frac{w_{i,a}}{n_a (e_{n_a}\hspace{-3pt}\!-\!1)}+ \!\frac{1}{n_a (e_{n_a}\hspace{-3pt}\!-\!1)} \sum_{j=1}^{n_a-1} w_j \left(\!1\!+\!\frac{1}{n_a} \right)^{j}\\
	  &=\!\frac{w_{i,a}}{n_a (e_{n_a}\hspace{-3pt}\!-\!1)}+ \left(1+\frac{1}{n_a}\right)\beta_a^o - \frac{v \cdot e_{n_a}}{n_a (e_{n_a}-1)}
 \end{align*}
 From the last statement we obtain 
 $\beta_a^n - \beta_a^o\leq \frac{\beta_a^o}{n_a} - \frac{v \cdot e_{n_a}}{n_a (e_{n_a}-1)} + \frac{w_{i,a}}{n_a (e_{n_a}-1)}$.
\end{proof}
\begin{algorithm}[t]
  \caption{}
  \label{alg:3}
  \smallskip
  \begin{enumerate}
  \item Initialize $\beta_a=0$ for all $a \in A \cup \{\alpha\}$
  \item When impression $i$ arrives
	\begin{enumerate}
	    \item Compute $x=\argmax[a \in A \cup \{\alpha\}]\{c_a\cdot(w_{i,a}-\beta_a)\}$		    
	    \item assign $i$ to $x$ and update $\beta_x$ according to (\ref{equ:beta})
	\end{enumerate}
  \end{enumerate}
  where weights $c_a$ are defined as
$
 c_a=\begin{cases}
      1-\frac{1}{e_{n_a}} & a\in A\\
      1 & a=\alpha
     \end{cases}
$
\end{algorithm}

Notice that in Algorithm~\ref{alg:3} for each  $a \in A$ we have that $1/2 \leq c_a < 1 - 1/e$, i.e.\
for $n_a=1$ we have $c_a=1/2$ and $c_a$ grows with $n_a$ and converges to $1 - 1/e$.
The idea is to bias the algorithm towards advertisers with larger $n_a$ and in particular towards the 
ad-exchange.

We use $R_a(\A)$ for $a \in A \cup \{\alpha\}$ to denote 
the revenue the assignment $\A$ gets from advertiser $a$.
Thus, $R(\A) = \sum_{a \in A \cup \{\alpha\}}  R_a(\A)$.

\begin{theorem}\label{thm:main}
Let $\A$ be the assignment computed by Algorithm~\ref{alg:3}
then
$
  R(\A) \geq \sum_{a \in A \cup \{\alpha\}} c_a \cdot R_a(OPT)
$. 
\end{theorem}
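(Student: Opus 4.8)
The plan is to mimic the primal--dual argument behind Theorem~\ref{thm:1}, but now with one LP per advertiser and with the weighted potential $\beta_a$ from~(\ref{equ:beta}) in the role of the dual $\beta$-variable. For each $a\in A$ let $I^a_{OPT}$ be the set of impressions that $OPT$ assigns to $a$ and let $I^\alpha_{OPT}$ be those $OPT$ assigns to $\alpha$. For $a\in A$ introduce the assignment-type LP $P_a$ with objective $\max\sum_{i\in I^a_{OPT}}w_{i,a}x_{i,a}$ and constraints $0\le x_{i,a}\le 1$, $\sum_{i\in I^a_{OPT}}x_{i,a}\le n_a$; its optimum equals $R_a(OPT)$, so any feasible solution of the dual $D_a$ (minimize $n_a\beta_a+\sum_{i\in I^a_{OPT}}z_i$ subject to $z_i+\beta_a\ge w_{i,a}$ and $z_i,\beta_a\ge 0$) has objective $d_a\ge R_a(OPT)$. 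For the ad exchange we just set $d_\alpha:=R_\alpha(OPT)=\sum_{i\in I^\alpha_{OPT}}w_{i,\alpha}$, so that $c_\alpha d_\alpha=R_\alpha(OPT)$.

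Run Algorithm~\ref{alg:3} and construct the dual solutions online: for $\beta_a$ take the value Algorithm~\ref{alg:3} currently maintains (which satisfies~(\ref{equ:beta})), and when an impression $i\in I^a_{OPT}$ is processed set $z_i=\max\{0,\,w_{i,a}-\beta_a^o\}$, where $\beta_a^o$ is the value of $\beta_a$ just before $i$. Feasibility of $D_a$ needs $z_i+\beta_a^{\mathrm{final}}\ge w_{i,a}$, which follows once one records that $\beta_a$ is non-decreasing over the run --- a short direct computation from~(\ref{equ:beta}) using $\sum_{j=1}^{n_a}(1+1/n_a)^{j-1}=n_a(e_{n_a}-1)$, in the same spirit as Lemma~\ref{lem:beta}. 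Writing $\Delta d_a(i)$, $\Delta d_\alpha(i)$, $\Delta R(\A,i)$ for the changes of $d_a$, of $d_\alpha$, and of the revenue caused by processing $i$, telescoping gives $\sum_i\Delta d_a(i)=d_a$, $\sum_i\Delta d_\alpha(i)=R_\alpha(OPT)$, $\sum_i\Delta R(\A,i)=R(\A)$. Hence it suffices to prove, for every impression $i$,
\[
  \Delta R(\A,i)\ \ge\ \Delta d_\alpha(i)+\sum_{a\in A} c_a\,\Delta d_a(i),
\]
since summing and using $d_a\ge R_a(OPT)$ together with $c_\alpha=1$ yields $R(\A)\ge R_\alpha(OPT)+\sum_{a\in A}c_a R_a(OPT)=\sum_{a\in A\cup\{\alpha\}}c_a R_a(OPT)$.

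For the per-impression inequality fix $i$, let $x$ be the advertiser Algorithm~\ref{alg:3} assigns $i$ to and $a^\ast$ the one $OPT$ assigns it to. At most one $\beta_a$ changes (that of $x$, if $x\in A$) and at most one $z_i$ is created (in $D_{a^\ast}$, if $a^\ast\in A$), so the right-hand side has at most two nonzero summands. The engine of the argument: if $x\in A$, with $v$ the weight dropped from $x$'s bucket, Lemma~\ref{lem:beta} gives $n_x(\beta_x^n-\beta_x^o)\le\beta_x^o-\frac{v\,e_{n_x}}{e_{n_x}-1}+\frac{w_{i,x}}{e_{n_x}-1}$, and multiplying by $c_x=1-1/e_{n_x}$ (so that $c_x\frac{e_{n_x}}{e_{n_x}-1}=1$ and $c_x\frac1{e_{n_x}-1}=1-c_x$) turns this into $c_x\Delta d_x(i)\le c_x\beta_x^o-v+(1-c_x)w_{i,x}+c_x z_i^{(x)}$, where $z_i^{(x)}$ is the possibly-zero $z_i$-contribution to $D_x$. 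When $a^\ast=x$ one has $z_i^{(x)}=w_{i,x}-\beta_x^o\ge 0$ (the algorithm picked $x$ over $\alpha$, so $c_x(w_{i,x}-\beta_x^o)\ge c_\alpha w_{i,\alpha}\ge0$), and the bound collapses exactly to $c_x\Delta d_x(i)\le w_{i,x}-v=\Delta R(\A,i)$, while all other right-hand summands vanish. When $a^\ast\neq x$ we have $z_i^{(x)}=0$, so $\Delta R(\A,i)-c_x\Delta d_x(i)\ge c_x(w_{i,x}-\beta_x^o)$; the remaining right-hand summand is $\Delta d_\alpha(i)=w_{i,\alpha}$ if $a^\ast=\alpha$, or $c_{a^\ast}\Delta d_{a^\ast}(i)=c_{a^\ast}\max\{0,w_{i,a^\ast}-\beta_{a^\ast}^o\}$ if $a^\ast\in A$, and in both sub-cases it is dominated by $c_x(w_{i,x}-\beta_x^o)$ precisely because Algorithm~\ref{alg:3} chose $x$ (hence $c_x(w_{i,x}-\beta_x^o)\ge c_{a^\ast}(w_{i,a^\ast}-\beta_{a^\ast}^o)$ and also $\ge c_\alpha w_{i,\alpha}\ge0$). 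Finally the case $x=\alpha$ is immediate: $\Delta R(\A,i)=w_{i,\alpha}$ equals $\Delta d_\alpha(i)$ if $a^\ast=\alpha$, and otherwise $w_{i,\alpha}\ge c_{a^\ast}\max\{0,w_{i,a^\ast}-\beta_{a^\ast}^o\}=c_{a^\ast}\Delta d_{a^\ast}(i)$, again by the selection rule.

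I expect the main obstacle to be organizational rather than a single hard inequality: keeping straight the four combinations of "where the algorithm puts $i$" versus "where $OPT$ puts $i$", actually establishing $D_a$-feasibility (i.e.\ monotonicity of $\beta_a$), and checking that $c_a=1-1/e_{n_a}$ is exactly the constant that makes the Lemma~\ref{lem:beta} estimate telescope into $w_{i,x}-v$. The one genuinely new phenomenon compared with Theorem~\ref{thm:1} is that $\Delta d_x(i)$ for the advertiser $x\neq a^\ast$ receiving $i$ must be charged against $\Delta R(\A,i)$ even though $i\notin I^{x}_{OPT}$; the slack $c_x(w_{i,x}-\beta_x^o)$ left over from that charge is precisely the budget that pays for the $z_i$ created in $D_{a^\ast}$.
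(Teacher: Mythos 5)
Your proposal is correct and follows essentially the same route as the paper: the paper derives Theorem~\ref{thm:main} as the single-slot special case of Theorem~\ref{thm:multislot}, whose proof uses exactly your per-advertiser LPs $P_a$, $D_a$, the online dual solution $z_i = w_{i,a^\ast}-\beta^o_{a^\ast}$, Lemma~\ref{lem:beta} combined with the identity $c_a\,e_{n_a}/(e_{n_a}-1)=1$, and the $\argmax$ selection rule to charge $c_{a^\ast}\Delta d_{a^\ast}(i)$ against the slack $c_x(w_{i,x}-\beta^o_x)$. Your truncation $z_i=\max\{0,\cdot\}$ is a minor (and slightly more careful) refinement of the paper's construction, not a different argument.
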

Theorem~\ref{thm:main} will be a direct consequence of Theorem~\ref{thm:multislot}, which we will prove in the next section. 

Finally let us briefly discuss whether the constants $c_a$ are chosen optimally.
From a result in~\cite{KalyanasundaramP96} on online algorithms for $b$-matchings it follows immediately 
that the constants $c_a$ in Theorem~\ref{thm:main} are optimal for deterministic algorithms.
Moreover, in \cite{MehtaSVV07} it is shown that even randomized algorithms cannot achieve a better competitive ratio than $(1-1/e)$~\footnote{In \cite{MehtaSVV07} the authors study the Adwords problem but in \cite{FeldmanKMMP09} it is argued that
the given example can be also be interpreted as Online Ad Assignment problem.}.
So for large values of $n_a$ even randomized algorithms cannot improve over Algorithm~\ref{alg:3}.

\section{A Multi-Slot Online Algorithm}\label{sec:multislot}
\newcommand{\ax}[0]{\mathbf{a}}
\newcommand{\bx}[0]{\mathbf{b}}
In practice publishers often have several ad slots at a single page and want to avoid to show multiple ads from the same
advertiser on the same page to avoid annoying their users. 
This can be modeled as follows: A sequence ${\cal S} = S_0, S_1, \dots$ of
{\em sets} of impressions arrive in an online manner. Each set $S$ has be assigned 
(a) before any future sets have arrived, and 
(b) such that
non two impressions in $S$ are assigned to the same advertiser in $A$. 
Note that we allow multiple impressions from $S$ to be assigned to the ad exchange
as we expect the ad exchange to return different advertisers for them. 
Let the set of all impressions $I = \sum_{S \in {\cal S}} S$.
With Algorithm~\ref{alg:4} we present an online algorithm for this setting with the same competitive ratio as 
Algorithm~\ref{alg:3}. Note, however, that, unlike Algorithm~\ref{alg:3}, it is compared to the optimal offline solution that respects the above restriction. 
More formally,
we call a function $\ax:S \rightarrow A \cup \{\alpha\}$ assigning impressions $S$ to advertisers \emph{valid} 
if there are no $i,i'\in S$, $i\not=i'$, $a\in A$ such that $\ax(i)=\ax(i')=a$.
Our Algorithm~\ref{alg:4} generates a valid assignment and is compared to the revenue of the {\em valid} assignment generated by the optimal offline algorithm. 
Notice that the computation of ${\arg\!\max}$ in Algorithm~\ref{alg:4} is a weighted bipartite matching problem and 
thus can be computed efficiently.

\begin{algorithm}[t]
  \caption{}
  \label{alg:4}
  \smallskip
  \begin{enumerate}
  \item Initialize $\beta_a=0$ for all $a \in A \cup \{\alpha\}$
  \item When impressions $S=\{i_1,\dots,i_l\}$ arrive
	\begin{enumerate}
	    \item Compute $\displaystyle \bx=\argmax[\text{valid}\ \ax]\left\{ \sum _{i\in S} c_{\ax(i)} \cdot(w_{i,\ax(i)}-\beta_{\ax(i)})\right\}$		    
	    \item assign each $i$ to $\bx(i)$ and, if $\bx(i)\in A$, update $\beta_{\bx(i)}$ according to (\ref{equ:beta}).
	\end{enumerate}
  \end{enumerate}
  where weights $c_a$ are defined as
$
 c_a=\begin{cases}
      1-\frac{1}{e_{n_a}} & a\in A\\
      1 & a=\alpha
     \end{cases}
$
\end{algorithm}
Recall that $R_a(OPT)$ for $a \in A \cup \{\alpha \}$ is  the revenue that an optimal assignment generates from advertiser $a$.
We show the following performance bound.
\begin{theorem}\label{thm:multislot}
Let $\A$ be the assignment computed by Algorithm~\ref{alg:4}
and OPT the optimal multi-slot ad assignment,
then
$
  R(\A) \geq \sum_{a \in A \cup \{\alpha\}} c_a \cdot R_a(OPT)
$. 
\end{theorem}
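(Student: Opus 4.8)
The plan is to lift the per-advertiser linear-programming argument behind Theorem~\ref{thm:1} to the weighted-$\beta$ setting, replacing the single-impression step by a matching step. Fix an optimal offline assignment $OPT$; for $a\in A$ let $I^a_{OPT}$ be the (at most $n_a$) impressions that $OPT$ actually pays $a$ for, and $I^\alpha_{OPT}$ those $OPT$ sells at the exchange. For each $a\in A$ I introduce the primal $P_a=\max\{\sum_{i\in I^a_{OPT}}w_{i,a}x_i:\ \sum_i x_i\le n_a,\ 0\le x_i\le 1\}$ with dual $D_a=\min\{\sum_{i\in I^a_{OPT}}z_i+n_a\beta_a:\ z_i+\beta_a\ge w_{i,a}\ \forall i,\ z_i,\beta_a\ge 0\}$. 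Since $|I^a_{OPT}|\le n_a$, the optimum of $P_a$ equals $R_a(OPT)$, so by weak duality every feasible solution of $D_a$ has objective $d_a\ge R_a(OPT)$. I build a feasible $D_a$-solution along the run of Algorithm~\ref{alg:4}: for $\beta_a$ take its final value — which is non-decreasing over time, because~(\ref{equ:beta}) is a convex combination with fixed coefficients of the sorted top-$n_a$ weights and every update only replaces a weight by a larger one — and for $i\in I^a_{OPT}$ set $z_i=\max\{0,\ w_{i,a}-\beta_a^o\}$, where $\beta_a^o$ is the value of $\beta_a$ when $i$ is processed; feasibility follows from $\beta_a^{\mathrm{fin}}\ge\beta_a^o$. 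Also put $\beta_\alpha=0$ and $d_\alpha:=R_\alpha(OPT)=\sum_{i\in I^\alpha_{OPT}}w_{i,\alpha}$.

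The heart of the proof is a per-set inequality: for every arriving set $S$,
\[
\Delta R(\A,S)\ \ge\ \sum_{a\in A\cup\{\alpha\}}c_a\,\Delta d_a(S),
\]
where $\Delta d_\alpha(S)=\sum_{i\in S\cap I^\alpha_{OPT}}w_{i,\alpha}$ and, for $a\in A$, $\Delta d_a(S)$ is the increase of the $D_a$-objective when $S$ is processed (the $z_i$ newly fixed for $i\in S\cap I^a_{OPT}$, plus $n_a$ times the change of $\beta_a$). Since the $z_i$-parts and the $\beta_a$-parts telescope, summing over all $S$ gives $\sum_S\Delta d_a(S)=d_a$ and $\sum_S\Delta d_\alpha(S)=R_\alpha(OPT)$, hence (using $c_a\ge 0$) $R(\A)=\sum_S\Delta R(\A,S)\ge\sum_a c_a d_a\ge\sum_a c_aR_a(OPT)$, which is the theorem. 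To prove the per-set inequality I first observe that the maximum-weight valid matching $\bx$ computed by Algorithm~\ref{alg:4} never sends an impression $i$ to an advertiser $a\in A$ with $w_{i,a}<\beta_a^o$, for otherwise re-routing $i$ to $\alpha$ would strictly increase its weight (as $c_\alpha(w_{i,\alpha}-0)=w_{i,\alpha}\ge 0>c_a(w_{i,a}-\beta_a^o)$) while keeping the matching valid. Consequently, whenever the algorithm does assign $i$ to some $a\in A$ we have $w_{i,a}\ge\beta_a^o$, which is at least the smallest weight currently counted in $\beta_a$, so $i$ genuinely enters the top-$n_a$ list, the evicted value $v$ (if any) is $\le\beta_a^o$, and the revenue gained from $a$ is $w_{i,a}-v$. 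Now Lemma~\ref{lem:beta} bounds $n_a$ times the change of $\beta_a$ by $\beta_a^o-\frac{v\,e_{n_a}}{e_{n_a}-1}+\frac{w_{i,a}}{e_{n_a}-1}$; multiplying by $c_a=1-\tfrac1{e_{n_a}}$ and using the identities $c_a\frac{e_{n_a}}{e_{n_a}-1}=1$ and $c_a\frac{1}{e_{n_a}-1}=1-c_a$, the term $-v$ cancels the revenue lost to eviction, and after collecting terms the per-set inequality reduces to
\[
\sum_{i\in S\cap I^\alpha_{OPT}}w_{i,\alpha}+\sum_{a\in A}c_a\max\{0,\,w_{i^a_S,a}-\beta_a^o\}\ \le\ \sum_{i\in S:\,\bx(i)=\alpha}w_{i,\alpha}+\sum_{a\in A:\,\bx^{-1}(a)\neq\emptyset}c_a\big(w_{\bx^{-1}(a),a}-\beta_a^o\big),
\]
where $i^a_S$ denotes the unique impression of $S$ in $I^a_{OPT}$ and the corresponding term is read as $0$ if no such impression exists.

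Finally, the last display is exactly the optimality of the matching $\bx$. Define $\ax^*:S\to A\cup\{\alpha\}$ by $\ax^*(i)=a$ when $a\in A$, $i\in I^a_{OPT}$ and $w_{i,a}\ge\beta_a^o$, and $\ax^*(i)=\alpha$ otherwise; this is valid because at most one impression of $S$ lies in any $I^a_{OPT}$. Then $\sum_{i\in S}c_{\bx(i)}(w_{i,\bx(i)}-\beta_{\bx(i)}^o)\ge\sum_{i\in S}c_{\ax^*(i)}(w_{i,\ax^*(i)}-\beta_{\ax^*(i)}^o)$ by the choice of $\bx$; the left-hand side is precisely the right-hand side of the display, and the right-hand side is at least the left-hand side of the display because $\{i\in S:\ax^*(i)=\alpha\}\supseteq S\cap I^\alpha_{OPT}$ and all weights are nonnegative. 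Theorem~\ref{thm:main} then follows as the special case $|S_l|=1$, where the matching degenerates to a single $\arg\max$. The step I expect to be hardest is this per-set argument: keeping the impressions moved by the online matching disjoint, in the accounting, from the ones $OPT$ is paid for; making the ``enters-the-top-$n_a$-list'' claim airtight; and checking that the $c_a$-algebra coming out of Lemma~\ref{lem:beta} collapses exactly onto an instance of matching-optimality rather than merely onto something comparable.
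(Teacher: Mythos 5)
Your proposal is correct and follows essentially the same route as the paper's proof: the same per-advertiser LPs $P_a$/$D_a$ restricted to $I^a_{OPT}$, the same incrementally built dual solution with $\beta_a$ taken from the algorithm and $z_i$ fixed at assignment time, the same per-set inequality $\Delta R(\A,S)\ge\sum_a c_a\,\Delta d_a(S)$ proved by combining Lemma~\ref{lem:beta} with the optimality of the matching $\bx$ against a valid assignment derived from $OPT$, and the same identities $c_a\,e_{n_a}/(e_{n_a}-1)=1$ collapsing the algebra. The one substantive difference is that you truncate $z_i$ at $0$ and compare $\bx$ against the rerouted assignment $\ax^*$ (and separately argue that the algorithm never assigns $i$ to an $a$ with $w_{i,a}<\beta_a^o$); this explicitly handles the dual nonnegativity constraints and the exactness of the revenue accounting, two points the paper's write-up passes over silently, so your version is, if anything, slightly more careful.
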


The proof of Theorem~\ref{thm:multislot} generalizes ideas from the proof of Theorem~\ref{thm:1}. 
One of the main difference is that we now have to deal with several sets $I^a_{OPT}$ instead of just one set $I^A_{OPT}$ and thus also with several LPs.

\subsection{Proof of Theorem~\ref{thm:multislot}}
First we give a linear program $P_a$ and its dual $D_a$ for each $a \in A$ such that the final objective  value  of 
any feasible solution of $D_a$ is an upper bound of $R_a(OPT)$.
Note, that 
there is no direct relationship between the final objective values of the  $P_a$'s  and the revenue of the algorithm. 
However, we are able to construct a feasible solution for each $D_a$ with objective value $d_a$ such that 
the revenue $R(\A)$ of the algorithm is at least $\sum_{a \in A \cup {\alpha}}c_{a} \cdot d_{a}$. 
Together with the observation that $d_a \geq R_a(OPT)$ and a bound $d_\alpha$ on $R_\alpha(OPT)$ this proves the theorem.

Let $I^a_{OPT}$ be the impressions assigned to $a \in A \cup \{\alpha\}$ by the optimal (offline) assignment OPT.
We consider the following LPs for each $a \in A$. 
\vspace{-10pt}
\par\noindent
\begin{minipage}[t]{0.47\textwidth}
\begin{align*}
 \text{\textbf{Primal $P_a$:} } \max& \  \sum_{i \in I^a_{OPT}} w_{i,a}\ x_{i,a}\\[5pt]
   x_{i,a} & \leq 1 \quad \forall i \in I^a_{OPT}\\
   \sum_{i \in I^a_{OPT}} x_{i,a} &\leq n_a
\end{align*}
\end{minipage}
\hspace{10pt}
\begin{minipage}[t]{0.47\textwidth}
\begin{align*}
 \text{\textbf{Dual $D_a$:} } \min&\  \sum_{i  \in I^a_{OPT}} z_i + n_a \beta_a\\[5pt]
   z_i+ \beta_a &\geq w_{i,a} \quad \forall i  \in I^a_{OPT}
\end{align*}
\end{minipage}

\vspace{7pt}
\noindent
Note that the summation in the primal and the constraints in the Dual are only over the
impressions {\em in $I^a_{OPT}$}, i.e., the impressions assigned to $a$ by OPT. 
The objective value of the optimal solution for $D_a$
is an upper bound for the objective of $P_a$, and thus also for $R_a(OPT)$. 
This implies that any feasible solution of $D_a$, also the one we construct next, gives an upper bound for $R_a(OPT)$.
As there might be impressions assigned to $a$ by the algorithm that do {\em not} belong to $I^a_{OPT}$, 
the objective value of $P_a$ is, however, not necessarily related to $R_a(\A)$.

Next, we give a \emph{feasible solution for $D_a$} for all $a \in A$, as follows. 
\begin{enumerate}
 \item[(i)] For the $\beta_a$ variables we use the values currently set by the Algorithm~\ref{alg:4};
 \item[(ii)] Let $\ax$ be the assignment of the impressions in $S$ by the optimal solution.
	     For each $i \in I$, we set $z_i=w_{i,\ax(i)}-\beta_{\ax(i)}$ exactly when the algorithm assigns $i$. 
\end{enumerate}
Note that this results in a feasible dual solution for {\em all} $a$ as each $i$ belongs to exactly one set $I^{\ax(i)}_{OPT}$ and $z_i$  is chosen
exactly so as to make
the solution of $D_{\ax(i)}$ feasible, together with the current $\beta_{\ax(i)}$ values. As $\beta_{\ax(i)}$ only increases in the
course of the algorithm the solution remains feasible at the end of the algorithm.
Let $d_a$ be the value of this feasible solution for $D_a$ for some $a\in A$ then we have $d_a \ge R_a(OPT)$.

To show that the claim holds after each assignment of a set of impressions $S$ 
we investigate assigning one such set and study the effect to both the  upper bound and the revenue we achieve.
To this end we introduce some notation:
\begin{itemize}
 \item $\Delta d_a(S)$ be the increase of the objective value $d_a$ when the algorithms assigns $S$, 
	i.e., the change in $d_a$ caused by the change in the $\beta_a$-values 
	\emph{and} the assignment of the $z_i$-values for all $i \in S \cap I^A_{OPT}$. 
 \item $\Delta d_\alpha(S)=\sum_{i \in S \cap I^\alpha_{OPT}} w_{i,\alpha}$.
 \item $\Delta R(\A,S)$ is the increase of revenue when $S$ is assigned.
\end{itemize}
Note that by the definitions  
(a) $\sum_{S \in {\cal S}} \Delta d_a(S) = d_a$, 
(b) $\sum_{S \in {\cal S}} \Delta d_\alpha(S) = d_\alpha$ and 
(c) $\sum_{S \in {\cal S}} \Delta R(\A,S)= R(\A)$. 

\begin{lemma}\label{lem:multislot}
$\displaystyle \Delta R(\A,S) \ge  \sum_{a \in A \cup {\alpha}}c_{a} \cdot \Delta d_{a}(S)$, for all  $S \in {\cal S}$.
\end{lemma}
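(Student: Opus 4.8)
The plan is to prove Lemma~\ref{lem:multislot} by combining three ingredients: a per-advertiser revenue bound built on Lemma~\ref{lem:beta} together with the precise value $c_a = 1 - 1/e_{n_a}$, the optimality of the matching $\bx$ that Algorithm~\ref{alg:4} computes in Step~2a, and the definition of the $z_i$ variables. Throughout I fix the set $S$ being processed and write $\beta_a^o$, $\beta_a^n$ for the value of $\beta_a$ before and after $S$ is processed (so $\beta_\alpha^o = \beta_\alpha^n = 0$). Let $\bx$ be the algorithm's assignment of $S$ and let $\ax$ be OPT's assignment restricted to $S$; both are valid, so for every $a \in A$ at most one impression of $S$ is assigned to $a$ by $\bx$ (call it $i_a$, if it exists) and at most one by $\ax$. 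Several impressions of $S$ may legitimately be routed to $\alpha$ by either assignment.

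First I would establish the per-advertiser bound. For $a \in A$ for which $i_a$ exists, the revenue Algorithm~\ref{alg:4} gains from $a$ while processing $S$ is $r_{i_a} = w_{i_a,a} - v_a$, where $v_a$ is the value dropped from $a$'s pool ($0$ if nothing is dropped). Multiplying the bound of Lemma~\ref{lem:beta} by $c_a n_a$ and using $c_a = (e_{n_a}-1)/e_{n_a}$, so that $c_a e_{n_a}/(e_{n_a}-1) = 1$ and $c_a/(e_{n_a}-1) = 1-c_a$, the $v_a$-terms cancel exactly and one gets
\[
   r_{i_a} \;\ge\; c_a\, n_a\,(\beta_a^n - \beta_a^o) \;+\; c_a\,(w_{i_a,a} - \beta_a^o).
\]
For an impression $i$ with $\bx(i) = \alpha$ the gained revenue is exactly $w_{i,\alpha} = c_\alpha(w_{i,\alpha} - \beta_\alpha^o)$ and no $\beta$-value changes. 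Summing over all $i \in S$, and adding the harmless zero $c_a n_a(\beta_a^n-\beta_a^o)=0$ for each $a\in A$ not used by $\bx$, yields
\[
   \Delta R(\A,S) \;\ge\; \sum_{a \in A} c_a\, n_a\,(\beta_a^n - \beta_a^o) \;+\; \sum_{i \in S} c_{\bx(i)}\,(w_{i,\bx(i)} - \beta_{\bx(i)}^o).
\]

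Next I would invoke the optimality of $\bx$: since $\bx$ maximizes $\sum_{i\in S} c_{\ax(i)}(w_{i,\ax(i)} - \beta_{\ax(i)})$ over valid assignments, the $\beta$-values in that maximization are exactly the $\beta^o$-values, and $\ax$ is valid, the second sum above is at least $\sum_{i\in S} c_{\ax(i)}(w_{i,\ax(i)} - \beta_{\ax(i)}^o)$. Finally I would rewrite this quantity: splitting according to whether $\ax(i)=\alpha$ or $\ax(i)\in A$, using $\beta_\alpha^o = 0$, and using that $z_i$ was defined as $w_{i,\ax(i)} - \beta_{\ax(i)}^o$, it equals $c_\alpha \Delta d_\alpha(S) + \sum_{a\in A} c_a \sum_{i \in S \cap I^a_{OPT}} z_i$. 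Plugging this back in and recalling $\Delta d_a(S) = n_a(\beta_a^n-\beta_a^o) + \sum_{i\in S\cap I^a_{OPT}} z_i$ for $a\in A$ gives exactly $\Delta R(\A,S) \ge \sum_{a\in A\cup\{\alpha\}} c_a \Delta d_a(S)$.

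The main obstacle is the per-advertiser inequality of the first step: this is the multi-slot analogue of the case analyses in Lemmas~\ref{theorem1:lem1} and~\ref{theorem1:lem2}, and it is precisely where Lemma~\ref{lem:beta} and the exact value $c_a = 1-1/e_{n_a}$ are needed so that the dropped-value terms vanish. A secondary subtlety I would be careful about is that one cannot compare the algorithm's revenue to OPT's advertiser by advertiser, because $\bx$ and $\ax$ send different impressions to different advertisers; the optimality of the matching $\bx$, mediated through the common $\beta^o$-values, is exactly what bridges the two assignments. I would also track validity carefully to ensure each $\beta_a$, $a\in A$, is touched at most once per set by each of $\bx$ and $\ax$.
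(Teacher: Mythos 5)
Your proof is correct and follows essentially the same route as the paper's: the same three ingredients (Lemma~\ref{lem:beta} combined with the exact value $c_a = 1-1/e_{n_a}$ to cancel the dropped-value terms, the optimality of $\bx$ in Step~2a to bridge $\ax$ and $\bx$ via the common $\beta^o$-values, and the definition of the $z_i$) appear in the paper's argument, which merely runs the chain of inequalities in the opposite direction, upper-bounding $\sum_a c_a\,\Delta d_a(S)$ by $\Delta R(\A,S)$ rather than lower-bounding $\Delta R(\A,S)$.
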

\begin{proof}
To simplify the notation let $\Delta d(S) = \sum_{a \in A \cup {\alpha}}c_{a} \cdot \Delta d_{a}(S)$.\smallskip

\emph{First consider $\Delta R(\A,S)$}: 
 For $a\in A$ let $v_a$ be the value of the $n_a$-th valuable impression assigned to $a$ 
 (the impression we would ``drop'' by assigning a new one),
 and let $v_\alpha=0$. 
 If $i$ is assigned to $\alpha$ then the gain in revenue is $w_{i,\bx(i)}$ which equals $w_{i,\bx(i)}-v_{\bx(i)}$.
 If $i$ is assigned to $a \in A$ then the gain in revenue is the difference between the revenue of the new impression and 
 the impression we have to drop, i.e., again $w_{i,\bx(i)}-v_{\bx(i)}$. Thus for $S$ altogether it holds
      \begin{equation*}
	  \Delta R(\A,S)= \sum_{i \in S} (w_{i,\bx(i)}-v_{\bx(i)})
      \end{equation*}\smallskip
 
 \emph{Now consider $\Delta d(S)$:}
 Recall that $\ax$ is the assignment of the optimal solution for the impressions $S$
 and let $\bx$ be the assignment from Algorithm~\ref{alg:4}.
 For all $a \in A$ let $\beta_a^o$, $\beta_a^n$ denote the value of $\beta_a$ right {\em before}, resp.~right {\em after} this assignment.
Recall that for $a = \alpha$, it holds that $\beta_a =0$ throughout the algorithm.
 Now note that
       \begin{equation*}
	  \Delta d(S) = \sum_{i \in S}\left( c_{\ax(i)} \cdot (w_{i,\ax(i)}-\beta^o_{\ax(i)}) + c_{\bx(i)} \cdot n_{\bx(i)} \cdot (\beta^n_{\bx(i)}-\beta^o_{\bx(i)})\right),
      \end{equation*}
 where the first term comes from the new variables $z_i$ which we set to $(w_{i,\ax(i)}-\beta^o_{\ax(i)})$, and 
 the second term comes from the updates of $\beta_a$.
 By the choice of $\bx$ in the algorithm we get
       \begin{align*}
	  \Delta d(S) &\leq \sum_{i \in S} \left( c_{\bx(i)} \cdot (w_{i,\bx(i)}-\beta^o_{\bx(i)}) + c_{\bx(i)} \cdot n_{\bx(i)} \cdot (\beta^n_{\bx(i)}-\beta^o_{\bx(i)})\right)\\
		&=\sum_{i \in S}  c_{\bx(i)} \cdot  \left((w_{i,\bx(i)}-\beta^o_{\bx(i)}) + n_{\bx(i)} \cdot (\beta^n_{\bx(i)}-\beta^o_{\bx(i)})\right).
      \end{align*} 
 Next we bound the contribution of each $i \in S$  separately 
 by analyzing two cases:
 \begin{itemize}
    \item If $\bx(i)=\alpha$ then we know that $\beta^o_{\bx(i)}=\beta^n_{\bx(i)}=v_{\bx(i)}=0$ and $c_{\bx(i)}=1$ . Thus
             \begin{equation*}
		  c_{\bx(i)} \cdot  \left( (w_{i,\bx(i)}-\beta^o_{\bx(i)}) + c_{\bx(i)} \cdot n_{\bx(i)} \cdot (\beta^n_{\bx(i)}-\beta^o_{\bx(i)})\right)
		  = (w_{i,\bx(i)}-v_{\bx(i)}).
	     \end{equation*} 
    \item If $\bx(i)\in A$ then we can apply Lemma~\ref{lem:beta} to bound $(\beta^n_{\bx(i)}-\beta^o_{\bx(i)})$ as follows
             \begin{align*}
		  c_{\bx(i)} \cdot \left( (w_{i,\bx(i)}-\beta^o_{\bx(i)}) + n_{\bx(i)} \cdot (\beta^n_{\bx(i)}-\beta^o_{\bx(i)})\right) &\leq\\
		  c_{\bx(i)} \cdot \left( (w_{i,\bx(i)}-\beta^o_{\bx(i)}) + \beta^o_{\bx(i)} - \frac{v_{\bx(i)} \cdot e_{n_{\bx(i)}}}{e_{n_{\bx(i)}}-1}+\frac{w_{i,\bx(i)}}{e_{n_{\bx(i)}}-1}\right) &=\\
		  c_{\bx(i)} \cdot \left( \frac{w_{i,\bx(i)}\cdot e_{n_{\bx(i)}}}{e_{n_{\bx(i)}}-1} - \frac{v_{\bx(i)} \cdot e_{n_{\bx(i)}}}{e_{n_{\bx(i)}}-1}\right)=(w_{i,\bx(i)}-v_{\bx(i)})
	     \end{align*} 
 \end{itemize} 
 In the last step we used that by definition $c_a=1-{1}/{e_{n_a}}$ for $a \in A$.
 By the above we obtain
      \begin{equation*}
	  \Delta d(S) \leq \sum_{i \in S}(w_{i,\bx(i)}-v_{\bx(i)}) =\Delta R(\A, S).\qedhere
      \end{equation*}
 \end{proof}
Now consider that the set of impression is given by a series $(S_j)_{0\leq j \leq n}$ of pairwise disjoint sets of impressions 
that show up simultaneously.
Exploiting Lemma~\ref{lem:multislot} we get:
  \begin{align*}
    R(\A) =& \sum_{j=0}^n\Delta R(\A, S_j) \geq \sum_{j=0}^n\Delta d(S_j)=\\
    & \sum_{j=0}^n \sum_{a \in A \cup \{\alpha\}} c_{a} \cdot \Delta d_{a}(S_j) \geq \sum_{a \in A \cup \{\alpha\}} c_a \cdot R_a(OPT)
  \end{align*}
This completes the proof of Theorem~\ref{thm:multislot}.

\section{An Algorithm for Computing Reserve Prices}
\label{sec:extensions}

In our model we assumed the publisher knows exactly how much revenue he can get from the ad exchange,
i.e., the $w_{i,\alpha}$ values are given for all $i \in I$.
The critical reader may interpose that this is not the fact in the real world 
or in the ad exchange model proposed in~\cite{Muthukrishnan2009}. 
Instead whenever sending an impression to the ad exchange an auction 
is run.
However, the publisher can set a reserve price and if all the bids are below the reserve price then 
he can still assign it to one of the contracted advertisers.

One nice property of Algorithms \ref{alg:2} \& \ref{alg:3} is that
they allow to compute the minimal price we have to extract from the ad exchange such that 
it is better to assign an impression to the ad exchange
than to a contracted advertiser. 
This price is given by $\max_{a\in A}\left\{c_a\cdot (w_{i,a}-\beta_a)\right\}$.
It follows that this price is also a natural choice for the reserve price.
Assume the auction executed at the ad exchange fulfills the following \emph{property
(P): If an ad impression is sold at the ad exchange, then the revenue achieved is independent of the reserve price chosen by the publisher.} 
Thus, the reserve price influences only {whether} the ad impression is sold, {not} the price that is achieved. 
Then Theorem~\ref{thm:multislot} applies, i.e., the revenue of the algorithm is at least
$\sum_{a \in A \cup \{\alpha\}} c_a \cdot R_a(OPT)$, even though the algorithm is not given the $w_{i,\alpha}$
values and it is compared to an optimal algorithm that does.
The reason is that the algorithm  makes exactly the same decisions and receives exactly the same revenue as
Algorithm~\ref{thm:multislot} that is given the $w_{i,\alpha}$ values.
\begin{theorem}\label{thm:reserveprice}
Let $\A$ be the assignment computed by the Algorithm described above, i.e., without knowledge of the $w_{i,\alpha}$ values.
If the auction at the ad exchange fulfills property P, then
$
  R(\A) \geq \sum_{a \in A \cup \{\alpha\}} c_a \cdot R_a(OPT)
$. 
\end{theorem}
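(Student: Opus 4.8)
The plan is to prove Theorem~\ref{thm:reserveprice} by a reduction to Theorem~\ref{thm:main} (the $1$-slot special case of Theorem~\ref{thm:multislot}): I will argue that the reserve-price algorithm, run against an exchange auction satisfying property~$P$, produces exactly the same assignment, the same $\beta$-values, and the same revenue as Algorithm~\ref{alg:3} run on a suitably defined known-$w_{i,\alpha}$ instance, and that the offline optimum is unaffected by this reduction. Concretely, the reserve-price algorithm maintains the variables $\beta_a$ exactly as Algorithm~\ref{alg:3} ($\beta_\alpha=0$ throughout, and $\beta_a$ for $a\in A$ updated via (\ref{equ:beta})); when impression $i$ arrives it computes $j=\argmax[a\in A]\{c_a(w_{i,a}-\beta_a)\}$ and the reserve price $r_i=\max_{a\in A}\{c_a(w_{i,a}-\beta_a)\}$, offers $i$ at the exchange at reserve $r_i$, and only if $i$ is not sold assigns $i$ to $j$ and updates $\beta_j$ by (\ref{equ:beta}). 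This uses only the known weights $w_{i,a}$ for $a\in A$, and a sale never changes any $\beta_a$.

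For the derived instance, let $w_{i,\alpha}$ be the revenue obtained when impression $i$ is sold at the exchange; by property~$P$ this is independent of the reserve and hence well-defined, and the one consequence of $P$ that I would use is that $i$, offered at reserve $r$, is sold iff $r\le w_{i,\alpha}$ (precisely what happens for a first-price auction with reserve). The central claim is that on this instance the reserve-price algorithm and Algorithm~\ref{alg:3} maintain identical $\beta$-values, make identical assignments, and collect identical revenue, which I would prove by induction over the arriving impressions. In the inductive step the two algorithms share the current $\beta$-values, hence compute the same $j$ and the same $r_i$; since $c_\alpha=1$ and $\beta_\alpha=0$, Algorithm~\ref{alg:3} sets $x=\alpha$ exactly when $w_{i,\alpha}=c_\alpha(w_{i,\alpha}-\beta_\alpha)\ge\max_{a\in A}\{c_a(w_{i,a}-\beta_a)\}=r_i$, i.e.\ exactly when the reserve-price algorithm sells $i$ --- in which case both collect $w_{i,\alpha}$ and leave every $\beta_a$ unchanged --- and otherwise both assign $i$ to $j$, collect $w_{i,j}$, and update $\beta_j$ via (\ref{equ:beta}) in the same way (the boundary case $r_i=w_{i,\alpha}$ being resolved towards $\alpha$).

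Finally I would observe that the offline optimum is unchanged by the reduction: an offline algorithm that knows all exchange bids can, for each impression it chooses to send to the exchange, use a reserve at most $w_{i,\alpha}$ and thereby collect exactly $w_{i,\alpha}$, so $R_a(OPT)$ is the same in the reserve-price setting as in the derived known-$w_{i,\alpha}$ instance, for every $a\in A\cup\{\alpha\}$. Combining the claim with Theorem~\ref{thm:main} applied to that instance then gives $R(\A)\ge\sum_{a\in A\cup\{\alpha\}}c_a\,R_a(OPT)$.

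The main obstacle is conceptual rather than computational: isolating exactly which property of the exchange auction is needed. Property~$P$ is used in two roles --- to make the revenue of a sale a reserve-independent quantity $w_{i,\alpha}$ (so that both $OPT$ and the online algorithm collect the same amount on a sale) and, together with the natural monotonicity of ``being sold'' in the reserve, to guarantee that the online algorithm's sell/no-sell decision at reserve $r_i$ coincides with Algorithm~\ref{alg:3}'s $\alpha$-versus-$A$ decision. Once this correspondence is pinned down, the induction and the $OPT$-comparison are routine.
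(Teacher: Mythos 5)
Your proof is correct and takes essentially the same route as the paper: the paper's entire argument is the one-line observation that with reserve price $\max_{a\in A}\{c_a\cdot(w_{i,a}-\beta_a)\}$ the algorithm makes exactly the same decisions and collects exactly the same revenue as Algorithm~\ref{alg:3} on the instance with known $w_{i,\alpha}$, after which Theorem~\ref{thm:main} applies. Your induction on the $\beta$-values and your explicit isolation of the ``sold iff $r\le w_{i,\alpha}$'' consequence of property $P$ simply spell out what the paper leaves implicit.
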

However, this technique does not work for Algorithm \ref{alg:4}.
When a set of impressions $S$ is assigned the right reserve price for an impression $i \in S$ depends
on the assignment of the other impressions in $S$. In particular the optimal reserve price for one impression 
might depend on the outcome of an auction for another item in the same set. 
We leave integrating reserve prices in the multi-slot setting as an open question for future research.

\section*{Acknowledgments}
The research leading to these results has received funding from the European Research Council 
under the European Union's Seventh Framework Programme (FP/2007-2013) / ERC Grant
Agreement no.\ 340506 and from
the Vienna Science and Technology Fund (WWTF) through project ICT10-002.

The authors are grateful to Claire Kenyon and Moses Charikar for useful discussions on formulating the model.

A preliminary version of this paper has appeared in the proceedings of the 12th International Workshop on Approximation and Online Algorithms~\cite{DvorakH14}.

\bibliographystyle{elsarticle-num}

\end{document}